\newtheorem{theorem}{Theorem}
\newtheorem{lemma}[theorem]{Lemma}
\newtheorem*{problem}{Problem}
\newtheorem{corollary}[theorem]{Corollary}
\newtheorem{proposition}[theorem]{Proposition}
\def\ll{,\ldots,}
\journal{Discrete Applied Mathematics}
\begin{document}

\begin{frontmatter}



\title{$H$-colouring $P_t$-free graphs in subexponential time}

\author[oxf]{Carla Groenland} 
\ead{groenland@maths.ox.ac.uk}
\author[war]{Karolina Okrasa}
\ead{okrasak@student.mini.pw.edu.pl}
\author[war]{Pawe\l ~Rz\k{a}\.{z}ewski}
\ead{p.rzazewski@mini.pw.edu.pl}
\author[oxf]{Alex Scott\fnref{fn1}}
\ead{scott@maths.ox.ac.uk}
\author[pri]{Paul Seymour\fnref{fn2}}
\ead{pds@math.princeton.edu}
\author[pri]{Sophie Spirkl}
\ead{sspirkl@math.princeton.edu}

\fntext[fn1]{Supported by a Leverhulme Trust Research Fellowship.}
\fntext[fn2]{Supported by ONR grant N00014-14-1-0084 and NSF grant DMS-1265563.}
\address[oxf]{Mathematical Institute, University of Oxford, Oxford, OX2 6GG, United Kingdom.}
\address[war]{Faculty of Mathematics and Information Science, Warsaw University of Technology, Warsaw, 00-661, Poland.}
\address[pri]{Princeton University, Princeton, NJ 08544, USA}

%
%

\begin{abstract}
A graph is called $P_t$-free if it does not contain the path on $t$ vertices as an induced subgraph. 
Let $H$ be a multigraph with the property that any two distinct vertices share at most one common neighbour. We show that the generating function for (list) graph homomorphisms from $G$ to $H$ can be calculated in subexponential time  $2^{O\left(\sqrt{tn\log(n)}\right)}$ for $n=|V(G)|$ in the class of $P_t$-free graphs $G$. As a corollary, we show that the number of 3-colourings of a $P_t$-free graph $G$ can be found in subexponential time. On the other hand, no subexponential time algorithm exists for  4-colourability of $P_t$-free graphs assuming the Exponential Time Hypothesis. Along the way, we prove that $P_t$-free graphs have pathwidth that is linear in their maximum degree.
\end{abstract}

\begin{keyword}
colouring \sep $P_t$-free\sep subexponential-time algorithm\sep partition function \sep path-decomposition


\end{keyword}

\end{frontmatter}



\section{Introduction}
Throughout this paper, graphs do not have multiple edges or loops. When we need general graphs (with multiple edges and loops), we call them multigraphs. We use the notation $vv'$ for the edge $\{v,v'\}$. For a multigraph $G$, the set $N_G(v)=\{v': vv'\in E(G)\}$ contains $v$ if and only if $G$ has a loop at vertex $v$. 

A \emph{$k$-colouring} of a graph $G$ is a function $c:V(G) \to \{1,\dots, k\}$ such that $c(v) \neq c(v')$ for all $vv' \in E(G)$. 
The decision problem \textsc{$k$-colourability} asks whether a given graph $G$ is $k$-colourable. This problem is NP-complete for $k\geq 3$ in general. 
In order to investigate what graph structure makes the decision problem hard, a natural question is whether the problem becomes easy if it is restricted to instances that do not contain a particular structure.
Thus we restrict to the class of $F$-free graphs, that is, those graphs which do not contain $F$ as an induced subgraph, for some fixed graph $F$. If $k$ is part of the input, a full classification is given by Kr\'{a}l et al. \cite{Kral01ea}. For fixed $k\geq 3$, \textsc{$k$-colourability} is shown to be NP-complete for $F$-free graphs if $F$ is either a cycle $C_\ell$ for $\ell\geq 3$ \cite{Kaminski07Lozin} or the claw $K_{1,3}$ \cite{Holyer81,Leven83Galil}. Any graph $F$ which does not contain a cycle nor the claw is a disjoint union of paths. 

Let $P_t$ denote the path on $t$ vertices. Polynomial-time algorithms for deciding \textsc{$k$-colourability} for $P_t$-free graphs exist for $t\leq 5$ \cite{Hoang10ea}, $(k,t)=(4,6)$ \cite{Chudnovsky18ea,Chudnovsky18eaii} and $(k,t)=(3,7)$ \cite{Bonomo17ea}. On the other hand, Huang \cite{Huang16} showed \textsc{$4$-colourability} is NP-complete for $P_7$-free graphs and \textsc{$5$-colourability} is NP-complete for $P_6$-free graphs. It is an open problem to determine the complexity of \textsc{3-colourability} of $P_t$-free graphs for $t\geq8$. 

It is also open whether \textsc{maximum independent set} is decidable in polynomial time for $P_t$-free graphs for $t\geq 7$. Brause \cite{Brause17} and Basc\'{o}, Marx and Tuza \cite{Basco17ea} independently showed a greedy exhaustive approach yields a subexponential-time algorithm for \textsc{maximum independent set} on $P_t$-free graphs. In this paper we show that there are subexponential-time algorithms for a larger class of problems, including \textsc{$3$-colourability} and \textsc{maximum independent set}, and also give counting results. Our algorithm builds on the following property of $P_t$-free graphs.
\begin{lemma}
\label{lem:pathdecomp}
A $P_t$-free graph of maximum degree $\Delta$ has pathwidth at most $(\Delta-1)(t-2)+1$. Moreover, a path-decomposition of this width can be found in polynomial time.
\end{lemma}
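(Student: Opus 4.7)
The plan is to exhibit a vertex ordering of $G$ whose vertex separation number---equivalently, pathwidth---is at most $(\Delta-1)(t-2)+1$; from such an ordering, a path decomposition of the same width is produced in the standard way.

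First I would reduce to the case in which $G$ is connected, since a path decomposition of $G$ is obtained by concatenating decompositions of its connected components. Fix an arbitrary vertex $v_0 \in V(G)$ and run BFS from $v_0$, yielding layers $L_0 = \{v_0\}, L_1, \ldots, L_d$ and a BFS tree $T$. Because any shortest $v_0$-to-$u$ path in $G$ is induced and $G$ is $P_t$-free, every such shortest path has length at most $t-2$; hence $d \leq t-2$, and $T$ has depth at most $t-2$.

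Next I would order $V(G)$ by a depth-first traversal of $T$, fixing the order of sibling subtrees carefully. When the DFS has just reached $u \in L_\ell$ with active root-to-leaf path $v_0 = a_0, a_1, \ldots, a_\ell = u$ in $T$, I would show that the \emph{frontier}---the set of processed vertices with an unprocessed neighbor---has size at most $1 + (\Delta-1)(t-2)$. The intuition is a layerwise decomposition: each ancestor $a_j$ contributes itself (one vertex in $L_j$), together with at most $\Delta - 1$ further vertices of $L_j$ that still have an unvisited neighbor. Summing over the at most $t-1$ layers and collapsing the ``$1$ per layer'' ancestors onto the single spine contribution $a_0 = v_0$ then yields the desired bound.

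The hard part will be controlling the frontier contribution coming from already-completed sibling subtrees. A non-tree BFS edge lies inside a layer or between consecutive layers, so a vertex buried deep in a finished subtree can still be on the frontier via such an edge reaching into a pending subtree. I would use the $P_t$-free hypothesis to rule out too many such crossings per layer: any large family of independent cross-edges connecting distinct completed subtrees to distinct pending subtrees can be stitched together with the $T$-paths up to their nearest common ancestor to yield an induced path of length at least $t$ in $G$, contradicting the hypothesis; the max-degree bound $\Delta$ on $T$ then caps the survivors by $\Delta - 1$ per layer. Finally, BFS, DFS, and translating the resulting ordering into an explicit path decomposition are all polynomial-time, giving the \emph{moreover} part of the statement.
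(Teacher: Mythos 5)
Your reduction to the connected case, the observation that BFS layers are indexed by distances (so induced shortest paths give depth at most $t-2$), and the general strategy of bounding the frontier of a vertex ordering are all fine. But the proof has a genuine gap exactly where you flag ``the hard part'': in a BFS tree, a non-tree edge may join two vertices of the same (or adjacent) layers lying in completely unrelated subtrees, so after the DFS finishes a subtree, arbitrarily many of its vertices could in principle remain on the frontier. Your proposed fix --- that a large independent family of such cross-edges could be ``stitched together'' with tree paths to nearest common ancestors into an induced $P_t$ --- is not an argument: the stitched walk passes through shared ancestors and tree segments whose mutual non-adjacency you have no control over, and since every vertex is within distance $t-2$ of the root, concatenating two root-paths does not by itself produce an induced path on $t$ vertices. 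The claimed cap of $\Delta-1$ survivors per layer is therefore unproven, and I do not see how to prove it; nothing in the BFS structure localises cross-edges the way a degree count would require.

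The paper sidesteps this entirely by choosing a different tree: an \emph{uncle tree}, i.e.\ a depth-first spanning tree grown so that every root-to-vertex path is induced (giving depth at most $t-2$, as in your argument), equipped with a sibling order such that every non-tree edge $uv$ of $G$ has one endpoint with an \emph{elder sibling} that is an ancestor of the other. The bags are then taken to be, for each leaf $p_i$, the root-to-$p_i$ path together with all vertices having an elder sibling on that path; the elder-sibling property makes edge coverage immediate, and the size bound is a purely local degree count (at most $\Delta-1$ children of the root and $\Delta-2$ extra children per internal path vertex), with no global $P_t$-freeness argument needed beyond the induced root-paths. If you want to repair your proof, you essentially need to replace the BFS tree by such a DFS-like tree in which every non-tree edge is forced into an ancestor/sibling configuration; with a plain BFS tree the frontier bound does not follow.
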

This lemma is an improvement on the treewidth bound of Basc\'{o} et al. \cite{Basco18ea} for $P_t$-free graphs (which they used to improve the algorithm of Basc\'{o} et al. \cite{Basco17ea}). The result is tight up to a constant factor, which can be seen by replacing the vertices of a binary tree by cliques.

We now introduce our framework. Let $H$ be a multigraph and $G$ a (simple) graph. A \emph{graph homomorphism} from $G$ to $H$ is a map $f:V(G)\to V(H)$ such that $vv' \in E(G)$ implies $f(v)f(v')\in E(H)$. (Thus a 3-colouring is a graph homomorphism to $K_3$.)  
A \emph{list $H$-colouring instance} $I=(G,L)$ consists of a graph $G$ together with a function $L:V(G)\to \mathcal{P}(V(H))$ that assigns a subset $L_{v} \subseteq V(H)$ to every $v\in V(G)$. A \emph{list $H$-colouring} of such an instance is a graph homomorphism $f$ from $G$ to $H$ such that $f(v)\in L_{v}$ for all $v\in V(G)$. We denote the set of list $H$-colourings of $(G,L)$ by $\mathcal{LC}((G,L),H)$.

A useful way to summarise information about $H$-colourings of a graph $G$ is to use a multivariate generating function (see for example \cite{Scott09Sorkin}). 
Given a multigraph $H$, we define the  \emph{partition function} $p_{(G,L)\to H}(x)$ by
\[
p_{(G,L)\to H}(x):=
\sum_{f\in \mathcal{LC}((G,L),H)}  
\prod_{u\in V(G)}w_{u,f(u)} x_{f(u)}
.
\]
We omit the lists $L$ where clear from context.
The weights $w_{v,h}$ (for $v \in V(G)$ and $h\in V(H)$) are included to allow more general application and can be ignored by choosing them identically one. For $H=(\{h,h'\},\{hh',h' h'\})$, $p_{G\to H}(x)$ gives the independent set polynomial when $x_{h'}$ is set to one. 

Summing appropriate coefficients of the polynomial, the partition function can be used to for example count the number of list $H$-colourings,
or to count the number of ``restrictive $H$-colourings'' \cite{Diaz05ea} in which a restriction is placed on the size of the preimages of the vertices of $H$.

The following theorem is our main result and will be proved in Section \ref{sec:alg}.
\begin{theorem}
\label{thm:main}
Let $H$ be a multigraph so that $|N_H(h)\cap N_H(h')|\leq 1$ for all distinct vertices $h,h'$ of $H$. For $t\geq 4$, the polynomial $p_{G\to H}(x)$ can be calculated for every $P_t$-free graph $G$ in time $2^{O\left(\sqrt{tn\log(n)}\right)}$ where $n=|V(G)|$. 
\end{theorem}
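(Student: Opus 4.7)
My plan is to combine Lemma~\ref{lem:pathdecomp} with a branching procedure that exploits the hypothesis $|N_H(h)\cap N_H(h')|\le 1$. Throughout I would work with the list version: compute $p_{(G,L)\to H}(x)$ for any list instance $(G,L)$ with $G$ a $P_t$-free graph on $n$ vertices. A degree threshold $D$ is chosen at the end to balance two competing costs.

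\textbf{Low-degree subroutine.} When $\Delta(G)\le D$, Lemma~\ref{lem:pathdecomp} yields in polynomial time a path-decomposition of $G$ of width $w\le(D-1)(t-2)+1=O(Dt)$. I would run on it the standard list-homomorphism DP, modified to accumulate the multivariate partition function: each DP cell stores, for a bag $B$ and a valid map $\phi\colon B\to V(H)$, the weighted polynomial sum of partial homomorphisms consistent with $\phi$. Since $|V(H)|$ depends only on $H$ and the partition function has total degree $\le n$ in constantly many variables, each cell has $\mathrm{poly}(n)$ bit-size and the DP runs in time $|V(H)|^{O(w)}\cdot\mathrm{poly}(n)=2^{O(Dt)}$.

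\textbf{Branching step.} Otherwise pick a vertex $v$ with $|N_G(v)|>D$ and use
\[
p_{(G,L)\to H}(x)=\sum_{h\in L_v}w_{v,h}\,x_h\cdot p_{(G-v,L^h)\to H}(x),
\]
where $L^h_u:=L_u\cap N_H(h)$ for $u\in N_G(v)$ and $L^h_u:=L_u$ otherwise. Each sub-instance is $P_t$-free and has one fewer vertex, so the recursion is well-founded. Crucially, once two branched vertices $v,v'$ have been mapped to distinct colours $h,h'$, every $u\in N_G(v)\cap N_G(v')$ has $L_u\subseteq N_H(h)\cap N_H(h')$, which has size at most one by hypothesis; thus $u$ is pinned and can be substituted out of the instance.

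\textbf{Balance and main obstacle.} If the branching reaches the low-degree base case after at most $\tau$ rounds, the algorithm runs in time $|V(H)|^{\tau}\cdot 2^{O(Dt)}$. Choosing $D=\Theta(\sqrt{n\log n/t})$ and showing $\tau=O(\sqrt{nt\log n})$ gives total running time $2^{O(\sqrt{nt\log n})}$, as required. The heart of the argument is the bound on $\tau$: one branching merely removes one vertex and need not lower $\Delta(G)$, so the $H$-condition must be used to amortise. I would establish a structural lemma guaranteeing that whenever a high-degree vertex $v$ is branched in the presence of the current branched set $S$, at least one colour choice in $L_v$ forces $\Omega(D)$ new pinnings via common neighbours with $S$, leveraging both $P_t$-freeness (which constrains the geometry of $S$ inside $G$, for instance via the diameter bound of $t-2$ on a connected $P_t$-free graph) and the hypothesis on $H$. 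To make the amortisation always fire, I would prioritise which high-degree vertex to branch on next so as to maximise overlap with the previously branched neighbourhoods (e.g.\ by choosing a vertex on a shortest path between two elements of $S$). Making this dichotomy between ``fast pinning'' and ``structurally restricted'' quantitative, and verifying that the residual case cannot persist too long in a $P_t$-free graph, is where I expect the bulk of the technical work to lie.
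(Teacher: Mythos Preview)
Your high-level scaffold (threshold $D=\Theta(\sqrt{n\log n/t})$; path-decomposition DP when $\Delta\le D$; branch on a maximum-degree vertex otherwise) is exactly what the paper does. The gap is in the amortisation, and it is a genuine one: the mechanism you propose for bounding the branching cost is both more complicated than necessary and not shown to work.

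Concretely, your plan is to track the set $S$ of already-branched vertices and argue that a new high-degree vertex $v$ has many common neighbours with $S$, which then get pinned whenever the chosen colours differ. But at the very first branching step $S=\emptyset$, so no pinning occurs at all; more generally, nothing forces the high-degree vertices you pick to share $\Omega(D)$ neighbours with $S$, and $P_t$-freeness (beyond the pathwidth bound) does not obviously supply this. Even your own formulation --- ``at least one colour choice forces $\Omega(D)$ new pinnings'' --- would be insufficient, since a depth bound needs good progress along \emph{every} root-to-leaf path of the recursion tree, not just in one branch.

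The paper avoids all of this with two changes. First, use as potential the total list size $w(I)=\sum_{v}|L_v|$ rather than the vertex count or branching depth. Second, prove the following one-step lemma, which needs only the hypothesis on $H$ and no history at all: in a reduced instance (all lists of size $\ge 2$), for a vertex $v$ of degree $d$ let $C_h=\{u\in N_G(v):L_u\subseteq N_H(h)\}$; then at most one $h\in L_v$ satisfies $|C_h|>d/2$. (If $h\ne h'$ both did, some $u\in C_h\cap C_{h'}$ would have $L_u\subseteq N_H(h)\cap N_H(h')$, hence $|L_u|\le 1$.) Consequently, for all but one colour the branch shrinks $w$ by at least $d/2\ge D/2$, and the remaining branch still shrinks $w$ by at least $2$ since $|L_v|\ge 2$. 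Plugging this into a routine recurrence (of the shape $f(w-2)+(|V(H)|-1)f(w-\Theta(D))\le f(w)$ with $f(w)=2^{c\sqrt{w\log w}}$) gives the $2^{O(\sqrt{tn\log n})}$ bound directly. No appeal to the geometry of $S$ or to $P_t$-freeness beyond Lemma~\ref{lem:pathdecomp} is needed.
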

For simple graphs $H$, the condition $|N_H(h)\cap N_H(h')|\leq 1$ for all distinct $h,h'$ is equivalent to $H$ not having $C_4$ as (not necessarily induced) subgraph.
\begin{corollary}
\label{cor}
The following problems can be solved for $P_t$-free graphs $G$ in time $2^{O\left(\sqrt{tn\log(n)}\right)}$ where $n=|V(G)|$:
\begin{itemize}
\item Counting the number of $H$-colourings for any fixed simple graph $H$ with no $C_4$ subgraph.
\item Computing the independent set polynomial. 
\end{itemize}
In particular, it can be decided in subexponential time whether a $P_t$-free graph is 3-colourable (and the number of 3-colourings can be counted). 
\end{corollary}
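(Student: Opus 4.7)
The plan is to derive each bullet of Corollary~\ref{cor} by applying Theorem~\ref{thm:main} with a suitably chosen target multigraph $H$, unit weights $w_{u,h}=1$, and trivial lists $L_v=V(H)$, followed by an evaluation (or mild substitution) in the resulting polynomial. Since $|V(H)|=O(1)$ in each of the cases considered, these post-processing steps add no cost to the asymptotic running time.

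For counting $H$-colourings when $H$ is a simple graph with no $C_4$ subgraph, I would first note that this is exactly the hypothesis of Theorem~\ref{thm:main}: the absence of a $C_4$ subgraph in a simple $H$ means that no two distinct vertices of $H$ share two common neighbours, i.e.\ $|N_H(h)\cap N_H(h')|\le 1$ for all distinct $h,h'\in V(H)$, as remarked just after the theorem. With the trivial lists $L_v=V(H)$ and unit weights, evaluating $p_{G\to H}(x)$ at $x_h=1$ for every $h$ collapses the polynomial to $|\mathcal{LC}((G,L),H)|$, which is precisely the number of graph homomorphisms $G\to H$, i.e.\ the number of $H$-colourings. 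The $3$-colouring statement is the special case $H=K_3$, which trivially contains no $C_4$ subgraph since $|V(K_3)|=3$; decidability of $3$-colourability follows from checking whether the computed count is nonzero.

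For the independent set polynomial, I would take $H$ to be the two-vertex multigraph on $\{h,h'\}$ with a single edge $hh'$ and a loop at $h'$, exactly the graph singled out in the introduction. Homomorphisms $f\colon G\to H$ are then in bijection with independent sets of $G$ via $S:=f^{-1}(h)$: since $h'$ carries a loop but $h$ does not, $f$ is a homomorphism iff no edge of $G$ has both endpoints mapped to $h$, which is exactly the condition that $S$ is independent. The hypothesis of Theorem~\ref{thm:main} holds, because $|N_H(h)\cap N_H(h')|=|\{h'\}|=1$, so the theorem produces
\[
p_{G\to H}(x_h,x_{h'})=\sum_{S\text{ independent in }G}x_h^{|S|}\,x_{h'}^{n-|S|}.
\]
Substituting $x_{h'}\leftarrow 1$ then yields precisely the independent set polynomial in $x_h$.

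There is no real obstacle here: once Theorem~\ref{thm:main} is in hand, each bullet reduces to choosing the right fixed $H$, verifying the common-neighbour condition (immediate in all cases), and performing one evaluation or substitution. The only point worth brief justification is the correctness of the identifications between the combinatorial objects being counted and the $H$-colourings of $G$ for the various $H$ considered, which is straightforward in each case.
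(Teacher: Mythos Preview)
Your proposal is correct and matches the paper's intended derivation: the paper does not give a separate proof of the corollary but treats it as immediate from Theorem~\ref{thm:main}, having already noted that for simple $H$ the common-neighbour condition is equivalent to being $C_4$-free and that the two-vertex looped multigraph yields the independent set polynomial upon setting $x_{h'}=1$. Your verification of the common-neighbour condition for the two-vertex multigraph and of the bijection between homomorphisms and independent sets simply spells out what the paper leaves implicit.
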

For example, if $G$ is $P_t$-free and $H$ is an odd cycle, then we can count the number of $H$-colourings of $G$ in subexponential time. This problem is $\#P$-complete if all graphs $G$ are allowed \cite{Dyer00Greenhill} and the corresponding decision problem is NP-complete \cite{Hell90Nesertil}.

The Exponential Time Hypothesis (ETH) \cite{Impagliazzo01Paturi} states that there is an $\epsilon>0$ such that there is no $O(2^{\epsilon n})$-time algorithm for \textsc{3-sat}. In Section \ref{sec:ext} we prove the following result that shows that it is in some sense unlikely that Corollary \ref{cor} will extend to $k$-colouring $P_t$-free graphs for $k>3$.
\begin{proposition}
\label{prop:ETH}
If the Exponential Time Hypothesis is true, then there is no algorithm running in time subexponential in the number of vertices of the graph for
\begin{itemize}
    \item \textsc{$4$-colourability} on $P_7$-free graphs;
    \item \textsc{$3$-colourability} for $F$-free graphs for any connected $F$ which is not a path.
\end{itemize}
\end{proposition}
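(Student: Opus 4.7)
To prove Proposition~\ref{prop:ETH}, the plan is to exhibit polynomial-time reductions from 3-SAT to each target problem whose output has size linear in the input. Under ETH combined with the sparsification lemma, 3-SAT on $n$ variables with $m=O(n)$ clauses admits no $2^{o(n)}$-time algorithm, so composing such a linear reduction with a hypothetical subexponential algorithm on the target class would yield a $2^{o(n)}$-time algorithm for 3-SAT, a contradiction.

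For the first bullet, I would revisit Huang's NP-completeness reduction \cite{Huang16} for \textsc{4-colourability} on $P_7$-free graphs. Since each variable and each clause of the 3-SAT instance is replaced by a constant-size gadget, the resulting $P_7$-free graph has $O(n+m)=O(n)$ vertices, yielding the required linear reduction.

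For the second bullet, I would split on the structure of the connected, non-path graph $F$. If $F$ contains a cycle, let $\ell$ be the girth of $F$; by minimality of $\ell$, $F$ contains $C_\ell$ as an induced subgraph, so every $C_\ell$-free graph is $F$-free. The reduction of Kami\'nski and Lozin \cite{Kaminski07Lozin} establishes NP-hardness of \textsc{3-colourability} on $C_\ell$-free graphs for every $\ell\geq 3$, and the per-variable and per-clause gadgets have size depending only on $\ell$, so the reduction is linear in $n$. If instead $F$ is a tree with a vertex of degree at least $3$, then any branch vertex together with three of its neighbours induces a $K_{1,3}$ in $F$ (those three neighbours being pairwise non-adjacent in a tree). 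Hence every claw-free graph is $F$-free, and it suffices to reduce linearly to \textsc{3-colourability} on claw-free graphs. For this I would compose Holyer's linear reduction \cite{Holyer81} from 3-SAT to \textsc{3-edge-colourability} on cubic graphs with the classical line-graph correspondence: for a cubic graph $G$, the line graph $L(G)$ is claw-free, has $3|V(G)|/2$ vertices, and is 3-vertex-colourable if and only if $G$ is 3-edge-colourable.

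The main obstacle is verifying the linearity of each reduction. Huang's and Holyer's classical gadget constructions are linear by inspection, but the Kami\'nski--Lozin construction requires more care: one must confirm that the gadget sizes depend only on $\ell$ (a constant determined by the fixed $F$) and not on the 3-SAT instance, so that the overall vertex count remains $O(n)$.
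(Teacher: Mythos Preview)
Your proposal is correct and follows essentially the same route as the paper: Huang's linear reduction for the first bullet, and for the second bullet the split into the cycle case (handled via Kami\'nski--Lozin) and the claw case (handled via Holyer's edge-colouring reduction composed with the line-graph correspondence). One small correction worth making: the Kami\'nski--Lozin construction is a reduction from \textsc{3-colourability} on general graphs to \textsc{3-colourability} on graphs of large girth, replacing each \emph{vertex} by a constant-size gadget; it is not a direct \textsc{3-sat} reduction with per-variable and per-clause gadgets as you describe. The paper accordingly first applies the standard linear reduction from \textsc{3-sat} to \textsc{3-colourability} and then composes with Kami\'nski--Lozin, which preserves the $O(n+m)$ vertex bound you need.
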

This result might shed some light on why the complexity status of \textsc{$3$-colourability} of $P_t$-free graphs for $t$ large has remained open whereas the complexity of \textsc{$k$-colourability} of $F$-free graphs has been settled for other values of $k$ or $F$.



\section{Pathwidth of $P_t$-free graphs and dynamic programming}
\label{sec:pathdecompo}
A {\em path-decomposition} of a graph $G$ is a sequence of subsets $X_i$ of vertices of $G$ with three properties:
\begin{itemize}
\item The vertex set of $G$ equals $\bigcup_i X_i$,
\item For each edge of $G$, there exists an $i$ such that both endpoints of the edge belong to subset $X_i$, and
\item $X_\ell\cap X_j\subseteq X_i$ for every three indices $1\le \ell \le i\le j$.
\end{itemize}
The {\em pathwidth} of $G$ is defined as the minimum of $\max_i|X_i|-1$ over the path-decompositions of $G$.

Let $T$ be a rooted tree with root $r$.
For a vertex $v$,  let $T_v$ denote the path of $T$ between $r$ and $v$.
For each vertex $w$, fix a linear order of the set of children of $w$. We call this a {\em plane tree}. 
If $u,v$ are both children of $w$ and $u$ is earlier than $v$
in the corresponding ordering, we say $u$ is an {\em elder sibling} of $v$.

Let $T$ be a spanning tree of a graph $G$, equipped with orders to make it a plane tree. We call it an {\em uncle tree}
of $G$ if for every edge $uv$ of $G$ that is not an edge of $T$, one of $u,v$ has an elder sibling that is an ancestor of the other.
We use the following result of Seymour \cite{Seymour16}.
\begin{theorem}
\label{thm:seymour}
For every connected graph we can compute an uncle tree with any specified vertex as root in polynomial time.
\end{theorem}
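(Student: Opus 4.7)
The plan is to give a recursive construction of the uncle tree that is executed greedily from the root down. First, I would observe that in any uncle tree rooted at $r$, every neighbour of $r$ in $G$ must be a direct child of $r$: if $v\in N_G(r)$ were a strict descendant of some child of $r$, the non-tree edge $rv$ would have no witness for the uncle condition, since $r$ has no siblings and the only ancestor of $r$ is $r$ itself.

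Guided by this, I would build the tree recursively as follows: fix an arbitrary linear order $v_1<\cdots<v_k$ on $N_G(r)$, make $v_1,\ldots,v_k$ the children of $r$ in this order, compute the connected components $C_1,\ldots,C_m$ of the induced subgraph $G-r-N_G(r)$, and for each $C_j$ assign it to the \emph{eldest} $v_{i_j}$ among the nonempty set $N(C_j)\cap N_G(r)$. Setting $S_{v_i}=\{v_i\}\cup\bigcup_{j:\,i_j=i}C_j$, the subgraph $G[S_{v_i}]$ is connected (via $v_i$, which is adjacent to every component assigned to it), so I can recursively construct an uncle tree rooted at $v_i$ and hang it below $v_i$ in $T$.

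The verification that $T$ is an uncle tree proceeds by case analysis on a non-tree edge $uv$. Edges with both endpoints in a single $S_{v_i}$ are handled inductively from the recursive call. An edge between two children of $r$, say $v_i$ and $v_j$ with $i<j$, is handled by the sibling order itself: $v_i$ is an elder sibling of $v_j$ and is an ancestor of itself. The main case is a non-tree edge $v_iu$ with $u$ a strict descendant of some $v_{i'}\ne v_i$: the component $C$ containing $u$ is assigned to $v_{i'}$, and since $v_i\in N(C)\cap N_G(r)$, the eldest-assignment rule forces $v_{i'}<v_i$, so $v_{i'}$ is an elder sibling of $v_i$ and an ancestor of $u$. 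Finally, an edge between strict descendants in two different subtrees $S_{v_i}$ and $S_{v_{i'}}$ would lie inside $G-r-N_G(r)$ and hence inside a single connected component, contradicting the disjointness of the component assignment.

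The main obstacle I anticipate is correctly formulating the assignment rule so that the chosen sibling order automatically witnesses the uncle condition for the cross edges incident with a child of $r$; choosing the eldest adjacent child, as above, is exactly the right choice and is what distinguishes the construction from naive DFS or BFS (both of which fail in small examples such as $C_5$ or $K_{3,3}$). Each recursive step clearly runs in polynomial time (computing neighbourhoods, components, and assignments), and since each call strictly reduces the number of vertices, the whole algorithm runs in polynomial time.
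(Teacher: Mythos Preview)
Your proof is correct. The paper's own proof is a one-line parenthetical hint: grow a depth-first search tree subject to every root-to-vertex path being induced. Your top-down recursion is in fact this modified DFS unwound: the induced-path constraint forces every vertex adjacent to $r$ to become a child of $r$ (no such vertex can be entered from deeper in the search), and since the DFS from the first-processed child $v_1$ reaches precisely the components of $G-r-N_G(r)$ that touch $v_1$, each component is absorbed by the eldest adjacent child---exactly your assignment rule. So the two constructions produce the same tree for matching child orders; your write-up supplies the explicit case verification of the uncle property that the paper omits, while the paper's DFS phrasing makes immediate the fact (used in Lemma~\ref{lem:pathdecomp}) that every root-to-leaf path is induced. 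Your remark that naive DFS fails on $C_5$ is well taken, but the paper's induced-path DFS is precisely the non-naive variant that your recursion computes.
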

(The proof is easy; grow a depth-first tree, subject to the condition that the path of the tree between each vertex and 
the root is induced.)

\begin{proof}[Proof of Lemma \ref{lem:pathdecomp}]
We may assume that $G$ is connected.
Take an uncle tree $T$ of $G$, and order its leaves as $p_1\ll p_k$ say, in the natural order of the leaves of a plane tree. 
For $1\le i\le k $, let $X_i$ be the set of vertices
of $T$ that either belong to $T_{p_i}$, or have an elder sibling (and hence also a parent) in this set.
We claim that the sequence $(X_1\ll X_k)$ is a path-decomposition of $G$.
We check that:
\begin{itemize}
\item Every vertex belongs to some $X_i$ (this is clear). 
\item For all $u,v$ adjacent in $G$, there exists $i$ with $u,v\in X_i$. To see this, since $T$ is an uncle tree we may assume that
$u$ has an elder sibling $u'$ that is an ancestor of $v$. Choose a leaf $p_i$ of $T$ such that $T_{p_i}$ contains $v$;
then the common parent of $u,u'$ belongs to $T_{p_i}$, and hence $u,v\in X_i$.
\item  $X_\ell \cap X_j\subseteq X_i$ for $1\le \ell \le i\le j\le k$. To see this, let $v\in X_\ell \cap X_j$;
consequently either $v$ or an elder sibling of $v$ belongs to $T_{p_\ell}$,
and either $v$ or an elder sibling belongs to $T_{p_j}$. It follows that either $v$ or an elder sibling of $v$ belongs to $T_{p_i}$,
from the ordering of the leaves of $T$.
\end{itemize}
This proves the claim.

If $T$ is an uncle tree of $G$, then each path of $T$ starting from $r$ is induced in $G$; and so if $G$ does not contain
$P_t$, then $|T_{p_i}|\leq t-1$ for each $i$, and so $|X_i|\leq \Delta-1+(t-3)(\Delta-2)+1$, where $\Delta$ denotes the maximum degree of $G$.
\end{proof}
We give a short outline of how the standard dynamic programming approach can be applied to compute $p_{G\to H}(x)$ in time $2^{O(p)}n$ given a path-decomposition of width $p$ of a graph $G$ on $n$ vertices.

Let $(X_1,\dots,X_k)$ be the given path-decomposition of $G$. We define $X(i)=\bigcup_{j\leq i}X_j$. For each $i\in [k]$ and list $H$-colouring $g:X_i\to V(H)$, we compute  the polynomial $p_{G[X(i)]\to H}$ with the vertices in $X_i$ precoloured: we define $p_i(g)$ as the sum, over the list $H$-colourings $f:X(i)\to V(H)$ such that $f|_{X_i}=g$, of \[
\prod_{u\in X(i)}w_{u,f(u)} x_{f(u)}
.
\]
For each list $H$-colouring $g:X_1\to V(H)$, we set $p_1(g)=\prod_{v \in X_1} w_{v,g(v)}x_{g(v)}$. Having computed all $p_i(g)$ for some $i$, for each list $H$-colouring $g:X_{i+1}\to V(H)$ we select the list $H$-colourings $g_1,\dots,g_\ell$ of $X_i$ that are compatible with $g$, that is, $g_i(v)=g(v)$ for $v\in X_i\cap X_{i+1}$ and $g(u)g(v)\in E(H)$ if $uv\in E(G)$ for all $u\in X_i$ and $v\in X_{i+1}$. Since $N_G[v]\cap X(i)\subseteq X_i$ for all $v\in X_{i+1}\setminus X_i$, we can then compute
\[
p_{i+1}(g)= \sum_{j=1}^\ell p_i(g_j) \prod_{v\in X_{i+1}\setminus X_i} w_{v,g(v)}x_{g(v)}.
\]
Finally, we calculate the desired $p_{G\to H}$, which is the sum, over all list $H$-colourings $g$ of $X_k$, of $p_k(g)$.

\section{Algorithm and time analysis}
\label{sec:alg}
Throughout this section, $H$ is a fixed multigraph such that $|N_H(h)\cap N_H(h')|\leq 1$ for all distinct $h,h'$ in $H$. We allow loops in $H$ but no multiple edges.\footnote{It is possible to extend the algorithm to compute a version of $p_{G\to H}(x)$ with edge weights $A_{h,h'}$ for $h,h'\in V(H)$, but in this case the weights $w_{v',h'}$ have to be updated in Line 3 and 6 of Algorithm HCol to $w_{v',h'} A_{h,h'}$ for all $v'\in N_{v}$.} The $P_t$-free graphs $G$ are assumed to be simple.


We shall say a list colouring instance $I=(G,L)$ has \emph{weight} $w(I)=\sum_{v\in V(G)}|L_{v}|$ and is \emph{reduced} if $|L_{v}|\geq 2$ for all $v\in V(G)$. 
The key observation we need is the following.
\begin{lemma}
\label{lem:neighbourhoodlists}
Let $I=(G,L)$ be a reduced list $H$-colouring instance and let $v\in V(G)$ with degree $d(v)$. For $h \in V(H)$, let 
\[
C_{h} =\{v' \in N_G(v): L_{v'} \subseteq N_H(h)\}.
\]
Then there is at most one $h \in L_{v}$ for which $|C_{h}|> \frac12 d(v)$. 
\end{lemma}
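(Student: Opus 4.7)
The plan is to argue by contradiction and use the fact that the instance is reduced together with the common-neighbour bound on $H$.

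First, I would suppose there are two distinct vertices $h_1, h_2 \in L_v$ with $|C_{h_1}| > \tfrac{1}{2}d(v)$ and $|C_{h_2}| > \tfrac{1}{2}d(v)$. Since $C_{h_1}, C_{h_2} \subseteq N_G(v)$ and $|N_G(v)| = d(v)$, inclusion-exclusion gives
\[
|C_{h_1} \cap C_{h_2}| \;\geq\; |C_{h_1}| + |C_{h_2}| - d(v) \;>\; 0,
\]
so there is some $v' \in C_{h_1} \cap C_{h_2}$.

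The next step is to exploit the structure of $H$. By the definition of $C_{h_i}$, the list $L_{v'}$ is contained in both $N_H(h_1)$ and $N_H(h_2)$, hence $L_{v'} \subseteq N_H(h_1) \cap N_H(h_2)$. By the standing assumption on $H$, this intersection has size at most $1$, so $|L_{v'}| \leq 1$. This contradicts the assumption that $I$ is reduced, which requires $|L_{v'}| \geq 2$.

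There is no real obstacle here beyond chaining these two observations correctly: the pigeonhole step needs only that the two sets lie in a ground set of size $d(v)$, and the contradiction uses the reduced assumption together with the hypothesis that any two distinct vertices of $H$ share at most one common neighbour. I would just need to be careful to note that $h_1 \neq h_2$ is used when invoking the bound on $|N_H(h_1) \cap N_H(h_2)|$.
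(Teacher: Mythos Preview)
Your proof is correct and follows essentially the same argument as the paper: assume two distinct colours have $|C_h|>\tfrac12 d(v)$, use pigeonhole to find a common vertex $v'$ in their intersection, and derive $|L_{v'}|\leq 1$ from the common-neighbour bound on $H$, contradicting reducedness. The only difference is that you spell out the inclusion--exclusion inequality explicitly, which the paper leaves implicit.
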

\begin{proof}
Suppose $h \neq h'$ in $L_{v}$ both satisfy $|C_{h}|,|C_{h'}|>\frac12 d(v)$. Then there exists $v' \in N_G(v)$ such that $v' \in C_{h} \cap C_{h'}$. Hence $L_{v'}\subseteq N_H(h)\cap N_H(h')$, so that by our assumption on $H$ we find $|L_{v'}|\leq 1$, contradicting the assumption that $I$ is reduced.
\end{proof}
This lemma tells us that ``colouring'' a vertex $v$ of degree $d(v)=\Delta$ decreases the weight of a reduced instance by at least $\frac12\Delta$ for all but one ``colour'' in $L_{v}$. Either there is a vertex of high degree and we can reduce the weight significantly by colouring this vertex, or $\Delta$ is ``small'' and we can apply the results from the previous section to compute $p_{G\to H}(x)$ in time $2^{O(t\Delta)}$. 

Our algorithm ``HCol'' for computing the list $H$-colouring function of a graph $G$ is given below. This algorithm either terminates or recurses on instances of strictly smaller weight. Therefore, it always terminates in finite time. We can represent the recursions by a tree: the root is the first call of the algorithm and each recursive call creates a child. For $P_t$-free graphs, we can bound the number of nodes in this recursion tree.
\begin{proposition}
\label{prop:recursivecalls}
Let $t\geq 4$, $c>4\sqrt{t|V(H)|}/\log(2)$ and $f(w)=2^{c\sqrt{w\log(w)}}$. Then there exists an $n_0$ for Algorithm HCol such that if it is applied to an instance  $I=(G,L)$ of weight $w(I)$ with $G$ a $P_t$-free graph, then the number of nodes in the corresponding recursion tree is bounded by $f(w(I))$.   
\end{proposition}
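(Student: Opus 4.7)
The plan is to proceed by strong induction on the weight $w := w(I)$, establishing $T(w) \leq f(w)$, where $T(w)$ denotes the maximum number of nodes in the recursion tree produced by Algorithm HCol on any reduced $P_t$-free instance of weight at most $w$. For the base case, choose $n_0$ large enough that $f(w)$ dominates the (finite) trivial recursion-tree bound for all $w \leq n_0$. For the inductive step with $w > n_0$, I consider the behaviour of HCol at the root call.

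The discussion following Lemma \ref{lem:neighbourhoodlists} suggests that HCol either (i) terminates at the root by deferring to the pathwidth-based dynamic program of Section \ref{sec:pathdecompo}---available when the maximum degree $\Delta$ of $G$ falls below a threshold $D^{*}$ depending on $t$ and $|V(H)|$---yielding a one-node recursion tree trivially bounded by $f(w)$; or (ii) branches on a vertex $v$ with $d(v) = \Delta \geq D^{*}$, producing one child per colour $h \in L_v$. In case (ii), Lemma \ref{lem:neighbourhoodlists} implies that at most one colour $h \in L_v$ is ``bad'' (satisfies $|C_h| > \Delta/2$); colouring $v$ with this colour removes $v$ and thus lowers the weight by at least $|L_v| \geq 2$. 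For each of the remaining $|L_v| - 1 \leq |V(H)| - 1$ ``good'' colours, at least $\Delta/2$ neighbours of $v$ have lists that strictly shrink, so the weight drops by at least $|L_v| + \Delta/2 \geq 2 + \Delta/2$. Applying the inductive hypothesis then yields
\[
T(w) \;\leq\; 1 + f(w - 2) + (|V(H)| - 1)\, f(w - 2 - \Delta/2),
\]
so it suffices to verify that the right-hand side is at most $f(w)$ whenever $\Delta \geq D^{*}$.

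Writing $\phi(w) := c\sqrt{w \log w}$, so that $f(w) = 2^{\phi(w)}$ and $\phi'(w) \sim \frac{c\sqrt{\log w}}{2\sqrt{w}}$ for large $w$, a first-order Taylor estimate gives $f(w - \alpha) \leq f(w) \cdot 2^{-\alpha \phi'(w) + o(1)}$. Dividing $1 + f(w-2) + (|V(H)|-1) f(w-2-\Delta/2) \leq f(w)$ through by $f(w)$, the required inequality reduces asymptotically to
\[
(|V(H)| - 1) \cdot 2^{-\phi'(w)\,\Delta/2} \;\leq\; 2\phi'(w) \ln 2 - o(1),
\]
and a direct computation (taking logarithms) shows this is satisfied as long as $\Delta \geq \frac{2\sqrt{w \log w}}{c}(1 + o(1))$. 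The algorithm's threshold $D^{*}$ is chosen to meet this lower bound, and the proposition's constraint $c > 4\sqrt{t|V(H)|}/\log 2$ emerges once one additionally requires that $D^{*}$ be small enough for the pathwidth-based DP (whose complexity depends on $t\Delta$ via Lemma \ref{lem:pathdecomp}) to serve as the algorithm's terminating subroutine: the factor $\sqrt{|V(H)|}$ tracks the branching factor while $\sqrt{t}$ balances against the pathwidth.

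The main obstacle is the single ``bad'' branch: it reduces the weight by only $2$, so $f(w - 2)/f(w) \to 1$ and the only slack in the induction is the tiny margin $f(w) - f(w - 2) \sim 2\phi'(w)(\ln 2)\,f(w)$, which itself vanishes as $w \to \infty$. The entire quantitative burden of the argument is to fit the $|V(H)| - 1$ ``good'' children into this vanishing margin, which forces $D^{*} = \Theta(\sqrt{w \log w}/c)$ and ultimately pins down the leading constant in the exponent of $f$.
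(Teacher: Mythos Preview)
Your high-level strategy---strong induction on the weight, using Lemma~\ref{lem:neighbourhoodlists} to separate one ``bad'' branch (weight drop $\geq 2$) from the remaining ``good'' branches (weight drop $\Omega(\Delta)$), and then verifying a recurrence of the form $f(w-2)+(|V(H)|-1)f(w-\text{drop})+1\leq f(w)$---is exactly what the paper does. The paper packages the analytic verification into a standalone estimate (Lemma~\ref{lem:calculus}) rather than the Taylor heuristic you sketch, but the content is the same.

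There is, however, a genuine gap in how you set up the induction. You define $T(w)$ only for \emph{reduced} instances and then, in case (ii), write $T(w)\leq 1+f(w-2)+(|V(H)|-1)f(w-2-\Delta/2)$ by ``applying the inductive hypothesis''. But the children produced in line~6 are \emph{not} reduced: colouring $v$ with $h$ replaces each neighbour's list by its intersection with $N_H(h)$, which typically creates many singleton (or empty) lists. Your inductive hypothesis, stated only for reduced instances, does not apply to these children, so the displayed recurrence is not justified. Relatedly, the algorithm has two further recursive lines that you do not address at all: line~3 (peeling off a vertex whose list is a singleton---this is precisely what fires on those non-reduced children, and each such call is a node of the recursion tree) and line~4 (splitting into connected components). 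The paper proves the induction over \emph{all} instances and handles these lines explicitly: line~3 via $f(w-1)+1\leq f(w)$, and line~4 via a superadditivity property $f(k)+f(\ell)+1\leq f(k+\ell)$ together with a small-component estimate. None of this is difficult, but it is not free either---in particular the component split needs its own inequality on $f$, and the line~3 chain contributes real nodes that your recurrence omits.

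A smaller point: the branching threshold is not yours to choose. The algorithm fixes it as $\sqrt{n\log n/t}$ in terms of $n=|V(G)|$, and the proof must translate this into a lower bound on the weight drop in terms of $w$ using $w\leq |V(H)|\,n$; this is where the factor $\sqrt{|V(H)|}$ in the constraint on $c$ actually enters (yielding $y=\tfrac{1}{4\sqrt{t|V(H)|}}$), rather than from the branching factor as you suggest.
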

\begin{algorithm}
Input: a list $H$-colouring instance $I=(G,L)$ for $G=(V,E)$.
\begin{enumerate}[nolistsep,leftmargin=0.5cm]
    \item If $|V|\leq n_0$, compute the list $H$-colouring function exhaustively. 
    \item If there exists $v\in V$ such that $|L_{v}|=0$, return 0. 
    \item If there exists $v\in V$ such that $|L_{v}|=1$, say $L_{v}=\{h\}$, then set $L'_{v'} = L_{v'}\cap N_H(h)$ for $v' \in N_G(v)$ and $L'_{v'}=L_{v'}$ for $v' \not \in N_{G}(v)$. 
    Return $w_{v,h}x_{h} $HCol($G-v,L')$. 
    \item If $G$ is not connected, let $G_1,\dots,G_k$ be the connected components. Return $\prod_{i=1}^k$HCol$(G_i,L|_{V(G_i)})$.
    \item If the maximum degree of $G$ is at most $\sqrt{n\log(n)/t}$, compute a path-decomposition of $G$ of width $O\left(\sqrt{tn\log(n)}\right)$ and compute the result using dynamic programming.
    \item Otherwise take $v\in V$ of maximal degree. For $h\in L_{v}$, set
    $L^{h}_{v'}=L_{v}\cap N_H(h)$ if $v'\in N_G(v)$ and $L^h_{v'}=L_{v'}$ if $v' \not\in N_G(v)$. Return $\sum_{h \in L_{v}}w_{v,h}x_h$ HCol$(G-v,L^{h})$. 
\end{enumerate}
\caption*{\textbf{Algorithm HCol}: Outputs the list $H$-colouring function.}
\end{algorithm}
Algorithm HCol gives the correct answer for any graph $G$: in line 2 we note that if some vertex has an empty list, then $p_{G\to H}=0$; in line 3 we note that if the list of a vertex $v \in V(G)$ has a single element $h \in V(H)$, then $v$ has to be mapped to $h$, i.e. $p_{G\to H}(x)=w_{v,h}x_h p_{G-v\to H}(x)$; in line 4 we use the algebraic identity $p_{G_1\sqcup G_2 \to H}(x)=p_{G_1 \to H}(x)p_{G_2\to H}(x)$; in line 6, we use $p_{G \to H}(x)=\sum_{h \in L_{v}} w_{v,h}x_{h} p_{G-{v}\to H}(x)$.\footnote{We left the lists of the vertices implicit in the notation; the precise way in which the lists need to be updated is given in the algorithm.}

Inspecting and updating the lists of vertices, finding a vertex of maximal degree and finding the connected components of a graph on $n$ vertices can all be done in time $Cn^2$, where the constant $C$ may depend on $H$ and $n_0$. Line 5 is applied at most once {per node} and takes $2^{O\left(\sqrt{tn\log(n)}\right)}$. Since $|L_v|\leq |V(H)|$ for all $v\in V(G)$, it follows that $w(I)\leq |V(H)|n=O(n)$. Theorem \ref{thm:main} follows from Proposition \ref{prop:recursivecalls} by observing that $Cn^22^{O\left(\sqrt{tn\log(n)}\right)}2^{O\left(\sqrt{tn\log(n)}\right)}=2^{O\left(\sqrt{tn\log(n)}\right)}$.

We require the following simple estimate.
\begin{lemma}
\label{lem:calculus}
Let $m,y>0$ and $c>y^{-1}$. There exists an $n_0\in \mathbb{N}$ such that $f(w)=e^{c\sqrt{w\log(w)}}$ satisfies
\[
f(n-2) + mf\left(n-y\sqrt{n\log(n)}\right)\leq f(n)
\]
for all $n\geq n_0$.
\end{lemma}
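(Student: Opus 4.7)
The plan is to write $f(w) = e^{g(w)}$ where $g(w) = c\sqrt{w\log w}$, divide the target inequality through by $f(n)$, and reduce it to showing
\[
e^{-A(n)} + m\, e^{-B(n)} \leq 1 \quad \text{for } n \geq n_0,
\]
where $A(n) := g(n) - g(n-2)$ and $B(n) := g(n) - g(n - y\sqrt{n\log n})$. A direct differentiation yields
\[
g'(w) = \frac{c(\log w + 1)}{2\sqrt{w\log w}} = (1+o(1))\frac{c}{2}\sqrt{\tfrac{\log w}{w}},
\]
and $g'$ is eventually decreasing in $w$, which lets me control both quantities via the mean value theorem.

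For the first exponent, $A(n) = 2g'(\xi_1)$ for some $\xi_1 \in [n-2,n]$, and so $A(n) = (1+o(1))c\sqrt{\log n/n}$. Since this tends to $0$, the expansion $1 - e^{-x} = x + O(x^2)$ gives $e^{-A(n)} = 1 - (1+o(1))c\sqrt{\log n/n}$. For the second exponent, since $y\sqrt{n\log n}/n \to 0$, the interval of relevance eventually sits inside $[n/2, n]$, so $g'(\xi_2) = (1+o(1)) g'(n)$ uniformly for $\xi_2$ in this interval. Hence
\[
B(n) = y\sqrt{n\log n}\cdot g'(\xi_2) = (1+o(1))\tfrac{cy}{2}\log n,
\]
so $m\, e^{-B(n)} = m\, n^{-(1+o(1))cy/2}$. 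The crucial point is that by hypothesis $cy > 1$, so $cy/2 > 1/2$, and therefore $m\,e^{-B(n)} = o(n^{-1/2})$; in particular it is much smaller than the term $c\sqrt{\log n/n}$ recovered from the first bound.

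Combining the two estimates gives $e^{-A(n)} + m\,e^{-B(n)} \leq 1 - (1+o(1))c\sqrt{\log n/n}$, which is strictly less than $1$ for all $n$ above a suitable threshold $n_0$. The only step that requires a little care is verifying that $g'(\xi_2) \sim g'(n)$ uniformly over $\xi_2 \in [n - y\sqrt{n\log n},\, n]$; this follows directly from the explicit formula for $g'$ together with $\xi_2/n = 1 - o(1)$ and $\log \xi_2 = \log n + o(1)$, so both $\sqrt{\log \xi_2/\xi_2}$ and the factor $(\log \xi_2 + 1)/\log \xi_2$ are asymptotic to their values at $n$. There is no essential obstacle: the proof reduces to careful bookkeeping with elementary asymptotic estimates, the key structural observation being that the hypothesis $cy > 1$ is exactly what makes the polynomial decay $n^{-cy/2}$ outpace the slower $\sqrt{\log n /n}$ gap between $f(n)$ and $f(n-2)$.
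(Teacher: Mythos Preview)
Your proof is correct and follows essentially the same route as the paper's: divide through by $f(n)$, estimate the two exponents $g(n)-g(n-2)\sim c\sqrt{\log n/n}$ and $g(n)-g(n-y\sqrt{n\log n})\sim \tfrac{cy}{2}\log n$, and use the hypothesis $cy>1$ to show that $m\,n^{-cy/2}$ is swallowed by the gap $c\sqrt{\log n/n}$. The only cosmetic difference is that you obtain these exponent estimates via the mean value theorem, whereas the paper first bounds $\log(n-2)$ and $\log(n-y\sqrt{n\log n})$ above by $\log n$ and then applies the elementary inequalities $\sqrt{1-x}\le 1-\tfrac{x}{2}$ and $e^{-x}\le 1-\tfrac{x}{2}$ directly.
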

\begin{proof}
Let $\epsilon>0$ be given such that $\sqrt{1-x}\leq 1-\frac12x$ and $e^{-x}\leq 1-x/2$ for all $0\leq x\leq \epsilon$. Choose $n_0$ sufficiently large such that $2/n,y\sqrt{\log(n)/n},c\log(n)/\sqrt{n}\leq\epsilon$ and $mn^{-cy/2}< \frac{c}2\sqrt{\log(n)/n}$ for all $n\geq n_0$ (by assumption, $cy>1$). We calculate
\begin{align*}
f(n-2)+mf\left(n-y\sqrt{n\log(n)}\right) 
&\leq e^{c\sqrt{(n-2)\log(n)}}+me^{c\sqrt{\log(n)}\left(n-y\sqrt{n\log(n)}\right)^{\frac12}}\\
&= f(n)^{\sqrt{1-2/n}}+mf(n)^{\left(1-y\sqrt{\log(n)/n}\right)^{\frac12}} \\
&\leq f(n)^{1-1/n}+mf(n)^{1-\frac12y\sqrt{\log(n)/n}} \\
&=f(n)\left[e^{-c\sqrt{\log(n)/n}}+me^{-\frac{1}2cy\log(n)}\right]
\end{align*}
But
\[
e^{-c\sqrt{\log(n)/n}}+me^{-\frac{1}2cy\log(n)} \leq 1-\frac{1}2c\sqrt{\log(n)/n}+mn^{-\frac12cy}< 1. \qedhere
\]
\end{proof}

\begin{proof}[Proof of Proposition \ref{prop:recursivecalls}]
Let $n_0$ be given from Lemma \ref{lem:calculus} applied with $m=|V(H)|$, $y=\frac1{4\sqrt{t|V(H)|}}$ and using $c \log(2)$ instead of $c$. Enlarging $n_0$ if necessary for the last three properties, we may now assume that 
\begin{align*}
f(w-2) + |V(H)| f\left(w-y\sqrt{w\log(w)}\right) &\leq f(w) & \text{ for all }&w\geq n_0,\\
f(k)+f(\ell)+1 &\leq f(k+\ell) & \text{ for all }&k,\ell\geq n_0, 
\\
f(k)+ (w-k)+1&\leq f(w) & \text{ for all }&w\geq n_0,~k< w,\\
w+1 &\leq f(w)& \text{ for all }&w\geq n_0.
\end{align*}

The proposition is proved by induction on $w=w(I)$. If the algorithm terminates in line 1, 2 or 5, then there is only one iteration and $f(n)\geq 1$ for all $n\in \mathbb{Z}_{\geq 0}$. If the algorithm reaches line 3, then $G$ has at least $n_0$ vertices and $|L_{v}|\geq 1$ for all $v\in V$, so that $w(I)\geq n_0$. Therefore, the statement holds for all $w<n_0$.

Suppose the statement has been shown for instances with $w(I)<w$ for some $w\geq n_0$. If the algorithm recurses on line 3, then the removed vertex contributed at least 1 to the weight, and so by induction at most $f(w-1)+1\leq f(w)$ iterations are taken. 

If the algorithm reaches line 4, we may assume the instance is reduced, and so $w(I)\leq 2|V(G)|$.
Suppose the graph $G$ is disconnected with connected components $G_1,\dots, G_k$. Let $I_i=(G_i,L|_{V(G_i)})$ and note that $I_i$ is also reduced. Hence if $|V(G_i)|\leq \frac12 w(I_i)\leq  n_0$, then the recursive call on $G_i$ will take a single iteration. Renumber so that $I_1,\dots,I_\ell$ have weight at most $2n_0$ and $I_{\ell+1},\dots,I_k$ have weight at least $2n_0$. By induction the algorithm takes at most 
\[
\sum_{i=1}^\ell 1+ \sum_{i=\ell+1}^k f(w(I_i))+1\leq f(w)
\]
iterations, where the inequality follows from the assumptions we placed on $n_0$, considering $k-\ell=0$, $k-\ell=1$ and $k-\ell>1$ separately.

At line 6, the hypothesis of Lemma \ref{lem:neighbourhoodlists} is satisfied. All but one of the instances have their weight reduced by at least \[
\frac1{2}\sqrt{n\log(n)/t}\geq \frac1{2\sqrt{t|V(H)|}} \sqrt{w\left(\log(w)-\log(|V(H)|)\right)}\geq y\sqrt{w\log(w)}
\]
(using that $w=\sum_{v\in V(G)}|L_{v}|\leq |V(H)|n$ and assuming $\log(w)-\log(|V(H)|)\geq \frac14\log(w)$ by enlarging $n_0$ if necessary).
The other instance has its weight reduced by at least 2, since the vertex $v$ with $|L_{v}|\geq 2$ is removed. By induction, the number of nodes in the recursion tree is at most $f(w-2)+(|V(H)|-1)f\left(w-y \sqrt{w\log(w)}\right)+1\leq f(w)$.
\end{proof}

\section{Extensions}
\label{sec:ext}
Our algorithm can easily be adapted to find the $H$-colouring $f$ of $G$ which minimises the cost $\sum_{v \in V(G)}w_{v,f(v)}$; in particular, \textsc{Minimum Cost Homomorphism} \cite{Gutin08ea} can be solved in subexponential time for $G$ and $H$ as above.  

Note that Lemma \ref{lem:neighbourhoodlists} is the bottleneck for extending the time complexity to, for example, $4$-colouring ($H=K_4$): it is possible that the neighbourhood of a vertex $v$ with $L_{v}=\{1,2,3,4\}$ has mostly neighbours with list $\{1,2\}$, so that both $C_3$ and $C_4$ are large. Assuming the Exponential Time Hypothesis (ETH), this is to be expected in view of Proposition \ref{prop:ETH}: under ETH, no such subexponential time algorithm can exist for \textsc{$4$-colourability} on $P_7$-free graphs. 

\begin{proof}[Proof of Proposition \ref{prop:ETH}]
Huang \cite{Huang16} gives a reduction of an instance of \textsc{3-sat} with $m$ formulas and $n$ variables into an instance of \textsc{4-colourability} for a $P_7$-free graph on $O(n+m)$ vertices. Therefore, any algorithm for \textsc{4-colourability} on $P_7$-free graphs yields an algorithm for \textsc{3-sat} with the same time dependence on the input size.

Let $F$ be a connected graph which is not a path. Then $F$ contains either the claw $K_{1,3}$ or a cycle.
We prove ETH implies that there is no subexponential time algorithm for \textsc{$3$-colourability} of $F$-free graphs. 
The standard reduction (for example \cite[Prop. 2.26]{Goldreich08}) from \textsc{3-sat} to \textsc{3-colourability} creates a graph on $O(n+m)$ vertices. 
Kami\'{n}ski and Lozin \cite{Kaminski07Lozin} reduce \textsc{3-colourability} to \textsc{3-colourability} on graphs of girth at least $g$ (for every $g\geq 3$) by (in the worst case) replacing each vertex by a constant-sized gadget. This handles the case when $F$ contains a cycle. 

For $F$ containing the claw $K_{1,3}$, note that claw-free graphs are a superset of line graphs. Holyer \cite{Holyer81} reduces \textsc{3-sat} to \textsc{3-edge colourability} on $3$-regular graphs. Given $n$ variables and $m$ clauses, constant-sized gadgets are created for each variable and clause; additional components are added to the variable gadgets for each time it occurs in a clause. Since there are at most $3m$ such occurrences, this creates a graph on $O(n+m)$ vertices. Since the graph is 3-regular, the number of edges is $O(n+m)$ as well. Hence the line graph of such a graph will have $O(n+m)$ vertices. Since $3$-colourings of the vertices of the line graph are in one-to-one correspondence with $3$-colourings of the edges of the original graph, we can hence reduce  \textsc{3-sat} to \textsc{3-colourability} of line graphs on $O(n+m)$ vertices.
\end{proof}
If ETH holds, then any polynomial-time reduction from \textsc{3-sat} to a problem with a subexponential algorithm must ``blow up'' the instance size. Our result therefore suggests that, if one attempts to prove NP-completeness of \textsc{3-colourability} for $P_t$-free graphs (for $t$ large) by designing gadgets, it may be necessary either to start from a problem whose instance size has already been ``blown up'' or to use gadgets which are not of bounded size.

Our algorithm only uses the property of $P_t$-free graphs that every induced subgraph has pathwidth $O(t\Delta)$. Seymour \cite{Seymour16} proves that a tree-decomposition of width $O(\ell \Delta)$ can be computed efficiently for graphs that do not contain cycles of length at least $\ell$ as induced subgraph; therefore, our algorithm extends to this class of graphs (after adjusting the standard dynamic programming approach of for example \cite{bodlaender13ea} to our setting in a similar fashion as done in Section \ref{sec:pathdecompo}). This motivates the following question.
\begin{problem}
For fixed $t$, are \textsc{3-colourability} and \textsc{maximum independent set} solvable in polynomial time on graphs that have no induced cycles of length greater than $t$?
\end{problem}
Theorem \ref{thm:seymour} can also be used to bound tree-depth of $P_t$-free graphs. The \emph{tree-depth} of a graph $G$ is the minimum height of a forest $F$ on the same vertex set with the property that for every edge of $G$, the corresponding vertices are in an ancestor-descendant relationship to each other in $F$ \cite{Nesetril12DeMendez}. 
\begin{corollary}
The tree-depth of a connected $P_t$-free graph $G$ of maximum degree $\Delta$ is at most $(t-2)(\Delta-1)+1$. 
\end{corollary}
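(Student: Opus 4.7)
The plan is to apply Theorem \ref{thm:seymour} to get an uncle tree $T$ of $G$, rooted at an arbitrary vertex $r$, and then reorganise $T$ into a rooted tree $F$ on $V(G)$ witnessing the desired tree-depth bound. Since every root-to-leaf path of $T$ is induced in $G$ and $G$ is $P_t$-free, each such path has at most $t-1$ vertices, so $T$ itself has depth at most $t-2$.

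To define $F$, for each vertex $w$ of $T$ with children $c_1, c_2, \ldots, c_k$ in elder-to-younger order, I would declare in $F$ that $c_k$ is the child of $w$ and, for $j<k$, that $c_j$ is the child of $c_{j+1}$; thus the family of children of $w$ becomes the chain $w \to c_k \to c_{k-1} \to \cdots \to c_1$ in $F$, and each $c_j$'s subtree is treated recursively in the same way. A short induction on $T$-distance then shows that whenever $y$ is a $T$-ancestor of $x$, $y$ is also an $F$-ancestor of $x$, since the $F$-walk from any vertex up to its $T$-parent only visits siblings of that vertex and therefore stays within the $T$-subtree of the parent. This immediately handles tree edges of $T$. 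For a non-tree edge $uv$ of $G$, the uncle-tree property provides, without loss of generality, an elder sibling $u'$ of $u$ that is a $T$-ancestor of $v$; writing $u = c_{j_u}$ and $u' = c_{j_{u'}}$ with $j_{u'}<j_u$ among the children of their common $T$-parent, the linearization places $u$ strictly above $u'$ in $F$, so $u$ is an $F$-ancestor of $u'$, and hence of $v$.

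For the depth bound, let $v$ have $T$-depth $\ell \le t-2$ along the $T$-path $r = v_0, v_1, \ldots, v_\ell = v$. The $F$-depth of $v$ equals $\sum_{i=0}^{\ell-1}(k_i - j_i + 1)$, where $k_i$ is the number of $T$-children of $v_i$ and $j_i$ is the rank of $v_{i+1}$ among them. Since $k_0 \le \Delta$ and $k_i \le \Delta - 1$ for $i \ge 1$ (one neighbour of $v_i$ is its $T$-parent), the sum is at most $\Delta + (t-3)(\Delta-1) = (t-2)(\Delta-1)+1$. The main care needed is in the inductive passage from $T$-ancestry to $F$-ancestry and in distinguishing the root from non-root vertices in the degree count; beyond that I do not foresee any serious obstacle.
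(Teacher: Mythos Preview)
Your proposal is correct and follows essentially the same approach as the paper: take an uncle tree $T$, linearise the children of each vertex into a path, and assemble these paths into a rooted tree $F$ witnessing the tree-depth bound; the height bound then comes from the fact that root-to-leaf paths in $T$ are induced (hence have at most $t-1$ vertices) and that non-root vertices have at most $\Delta-1$ children in $T$. Your write-up is in fact more careful than the paper's brief sketch --- in particular, you explicitly orient the sibling chain as $w\to c_k\to\cdots\to c_1$ and check that this orientation is exactly what is needed to turn the uncle-tree ``elder sibling'' condition into the required $F$-ancestor relation for non-tree edges, a point the paper leaves implicit.
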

Such a desired forest $F$ can be computed as follows. First compute an uncle tree $T$ for $G$. For each non-leaf vertex $v\in T$, create a path $P_v$ containing all the children of $v$. The forest $F$ is obtained by connecting the ``end''point of a path $P_v$ to the ``start''points of the paths of its children. Now recall that an uncle tree has height at most $t-1$ since each path from the root to a leaf is induced and that a non-root, non-leaf node has at most $\Delta-1$ children.  

\bibliographystyle{acm} 
\bibliography{colouringUpdated}




%
%
%
%
\end{document}